\newtheorem{fact}{Fact}
\journalname{Designs, Codes and Cryptography}
\begin{document}

\title{New Characterizations for the Multi-output Correlation-Immune Boolean Functions \thanks{This research is supported in part by NSFC (No. 61671013, 61672410, 61602361), NSF of Shaanxi Province (No. 2018JM6076), and the Programme of Introducing Talents of Discipline to Universities (China 111 Project, No. B16037)}
}


\author{Jinjin Chai         \and
        Zilong Wang         \and
        Sihem Mesnager      \and
        Guang Gong
}


\institute{Jinjin Chai and Zilong Wang\at
              State Key Laboratory of Integrated Service Networks\\
              School of Cyber Engineering, Xidian University \\
              \email{jj\_chai@163.com,\ zlwang@xidian.edu.cn}           
           \and
           Sihem Mesnager \at
              Department of Mathematics\\
              University of Paris VIII, University of Paris XIII\\
              \email{smesnager@univ-paris8.fr}
           \and
           Guang Gong \at
              Department of Electrical and Computer Engineering\\
              University of Waterloo\\
              \email{ggong@uwaterloo.ca}\\
}

\date{Received: date / Accepted: date}

\maketitle

\begin{abstract}
Correlation-immune (CI) multi-output Boolean functions have the property of keeping the same output distribution when some input variables are fixed. Recently, a new application of CI functions has appeared in the system of resisting side-channel attacks (SCA). In this paper, three new methods are proposed to characterize the $t$ th-order CI multi-output Boolean functions ($n$-input and $m$-output). The first characterization is to regard the multi-output Boolean functions as the corresponding generalized Boolean functions. It is shown that a
generalized Boolean functions $f_g$ is a $t$ th-order CI function if and only if the Walsh transform of $f_g$ defined here vanishes at  all points with Hamming weights between $1$ and $t$. Compared to the previous Walsh transforms of component functions, our first method can reduce the computational complexity from $(2^m-1)\sum^t_{j=1}\binom{n}{j}$ to $m\sum^t_{j=1}\binom{n}{j}$. The last two methods are generalized from Fourier spectral characterizations. Especially, Fourier spectral characterizations are more efficient to characterize the symmetric multi-output CI Boolean functions.

\keywords{Side-channel attacks\and Multi-output Boolean function\and Generalized Boolean function\and Correlation immunity\and Walsh transform\and Discrete Fourier transform}

\subclass{42A38 \and 94A60 \and 06E30}
\end{abstract}

\section{Introduction}\label{section:s1}
The correlation-immune (CI) functions were originally used to resist Siegenthaler's correlation attack~\cite{siegenthaler1985}(or {\lq divide and conquer attack \rq}~\cite{Siegenthaler1984}) in stream ciphers in the last century. The correlation immunity of functions gradually loses its interest with the development of new attacks. But, recently, in paper~\cite{carlet2014Correlation-immune,Carlet2018Constructing}, a new application of CI functions has appeared in the system of resisting side-channel attacks (SCA), which has renewed interest. These attacks on the implementations of block ciphers in embedded systems like smart cards, FPGA or ASIC assume an attacker model different from classical attacks, and are extremely powerful in practice. These implementations then need to include counter-measures, which reduces the efficiency of the cryptosystem and adds additional storage. The CI functions allow cost reduction of counter-measures to SCA. Moreover, these functions need to have low Hamming weights. 


We focus on the characterization of CI functions. 
A characterization of CI Boolean functions was obtained by Xiao and Massey~\cite{Xiao1988A} in terms of the Walsh transform in 1988. That is, a Boolean function is $t$ th-order CI if and only if its Walsh transform vanishes for all points with Hamming weights between $1$ and $t$. In 1959 Golomb~\cite{Golomb1959On} introduced the concept of the invariants of Boolean functions in order to classify Boolean functions. This work was collected in his book {\em Shift Register Sequences~\cite{Golomb1967Shift}}, Chapter VIII. Golomb did not mention the original applications for cryptography of his work on invariants until his paper~\cite{Golomb1999On} published in 1999. In fact, his work is the same concept with the Walsh spectral characterization of CI Boolean functions. It has been proposed in~\cite{CarletBoolean} to call this the Golomb-Xiao-Massey characterization. The Golomb-Xiao-Massey characterization of multi-output correlation immune functions comes directly from the one of correlation immune Boolean functions. That is, a multi-output Boolean function is $t$ th-order CI if and only if all its nonzero linear combinations of the component functions are $t$ th-order CI.
In addition to Golomb-Xiao-Massey characterization, other methods to characterize CI functions, such as matrices~\cite{Gopalakrishnan1995,camion1999correlation}, orthogonal arrays~\cite{Camion1991On,Bierbrauer96}, and the Fourier spectra~\cite{WangDiscrete2018,wang2019TheFourier} were also proposed.


Since there is a natural one-to-one correspondence between vectors in $\mathbb{F}_2^m$ and integers in $[0,2^m-1]$, we can represent a multi-output Boolean function as a corresponding generalized Boolean function. Schmidt \cite{Schmidt2006Quaternary} gave the $2$-adic expansion for a generalized Boolean function (this expansion is unique), and used it to study generalized bent functions that are applied in MC-CDMA. Similarly, we use this representation to get new characterizations for multi-output CI Boolean
functions. Our first characterization shows that a multi-output Boolean function is a $t$ th-order CI Boolean function if and only if the Walsh transform of the corresponding generalized Boolean function $f_g$ defined in this paper vanishes at all points with Hamming weights between $1$ and $t$. Compared to the previous Walsh spectral characterization method, this method reduces the complexity of calculations from $(2^m-1)\sum^t_{j=1}\binom{n}{j}$  to $m\sum^t_{j=1}\binom{n}{j}$ to determine whether a function is $t$ th-order CI. Wang and Gong ~\cite{WangDiscrete2018} investigated discrete Fourier transform  of (single-output) Boolean functions and deduced an equivalent condition for  $t$ th-order CI Boolean functions. Fourier spectral characterizations are generalized here to characterize the $t$ th-order multi-output CI Boolean functions. And these Fourier spectral characterizations are much more efficient to characterize the symmetric multi-output CI Boolean functions.


The rest of the paper is organized as follows. In Section~\ref{section:s3}, we introduce three representations of multi-output Boolean functions, the definitions of the correlation immunity, as well as the Walsh transform of the multi-output Boolean functions.
In Section~\ref{section:s4}, we present three new characterizations for multi-output CI Boolean functions.
The first characterization is in terms of the Walsh transform and the last two characterizations are in terms of the Fourier transforms over the complex field. Section~\ref{section:s7} concludes the paper.

\section{Preliminaries}\label{section:s3}
The following notations will be used throughout the paper.
\begin{description}
  \item[-] $n$ and $m$ are positive integers.
  \item[-] $\mathbb{F}_{2^m}$ is a finite field with $2^m$ elements. $\mathbb{Z}_{2^m}$ is a residue class ring of integers modulo $2^m$.
  \item[-] $\mathbb{F}_2^{m*}=\mathbb{F}_2^m \setminus \{\mathbf{0}\}$.
  \item[-] For $\bm{c}=(c_1,c_2,\cdots,c_n)\in \mathbb{F}_2^n$, $wt(c)$ denotes the {\em Hamming weight} of $\bm{c}$, i.e., the number of nonzero terms in $\bm{c}$.
  \item[-] For $1\leq i \leq m$, $\omega_i=exp(\frac{2\pi\sqrt{-1}}{2^i})$ is a $2^ith$ primitive root of unity over the complex field.
  \item[-] $\#\{\cdot\}$ denotes the number of elements in the set $\{\cdot\}$.
  \item[-] $S_n$ is a symmetric group consisting of permutations of the set $\{1,2,\cdots ,n\}$. $\pi\in S_n$ is a permutation of symbols $\{1,2,\cdots ,n\}$.
\end{description}

\subsection{The Representations of Multi-output Boolean Functions}
Here we give three representations of a multi-output Boolean function: representations as component functions, as a trace function, and as a generalized Boolean function. We shall introduce them respectively in this section.

A {\em Boolean function} is a function $f_b$: $\mathbb{F}^n_2\rightarrow \mathbb{F}_2$ with variable $\bm{x}=(x_1,x_2,\cdots,x_n)$, where $\mathbb{F}_2$ is the finite field with two elements, and $\mathbb{F}_2^n$ is $n$-dimension vector space over $\mathbb{F}_2$. It can be represented by its {\em algebraic normal form} (ANF):
\[f_b(\bm{x})=\sum^{2^n-1}_{k=0}c_k\prod^{n}_{j=1}x_j^{k_j},c_k\in\mathbb{F}_2,\]
where $(k_1,k_2,\cdots,k_n)$ is the binary expansion of $k$.

A $n$-input and $m$-output {\em multi-output Boolean function} can be represented as a function from $\mathbb{F}^n_2$ to $\mathbb{F}^m_2: f(\bm{x})=(f_1(\bm{x}),f_2(\bm{x}),\cdots,f_m(\bm{x}))$. Every component function $f_i$, $1 \leq i \leq m$, is a Boolean function. Obviously, $f(\bm{x})$ is a (single-output) Boolean function when $m = 1$. Multi-output Boolean functions are also called {\em vectorial Boolean functions}. 
We will refer to a multi-output Boolean function as an $(n,m)$-function for simplicity.

A multi-output Boolean function can be represented as a trace function from $\mathbb{F}_{2^n}$ to $\mathbb{F}_{2^m}$ when $m$ is a divisor of $n$. The vector space $\mathbb{F}_2^n$ can be endowed with the structure of the field $\mathbb{F}_{2^n}$. Then any multi-output Boolean function $f(\bm{x})$ can be viewed as a function from $\mathbb{F}_{2^n}$ to $\mathbb{F}_{2^m}$ ($\mathbb{F}_{2^m}$ is a sub-field of $\mathbb{F}_{2^n}$).
A multi-output Boolean function $f(\bm{x})$ can be represented in the form
$$\mbox{Tr}_m^n(\sum_{j=0}^{2^n-1}\delta_jx^j), \delta_j\in \mathbb{F}_{2^n},$$ where $\mbox{Tr}_m^n(x)=x+x^{2^m}+x^{2^{2m}}+x^{2^{3m}}+\cdots +x^{2^{n-m}}$ is the trace function from $\mathbb{F}_{2^n}$ to $\mathbb{F}_{2^m}$.

A {\em generalized Boolean function} $f_g$ is a function from $\mathbb{F}_2^n$ to $\mathbb{Z}_{2^m}$. Such a function can be uniquely expressed as a linear combination of the monomials \[1, x_1, x_2, \cdots, x_n, x_1x_2, x_1x_3, \cdots, x_{n-1}x_n, \cdots, x_1x_2x_3\cdots x_n,\]
where the coefficient of each monomial belongs to $\mathbb{Z}_{2^m}$. 
Since there is a natural one-to-one correspondence between the vectors in $\mathbb{F}_2^m$ and the elements in $\mathbb{Z}_{2^m}$, we can represent a multi-output Boolean function as its corresponding generalized Boolean function, i.e.,
\begin{equation}\label{gen-func}
f_g(x_1,x_2,\cdots,x_n)=\sum^{m}_{i=1}2^{i-1}f_i.
\end{equation}
Such representation was used to study generalized bent functions by Schmidt \cite{Schmidt2006Quaternary}.

Note that in formula (\ref{gen-func}), each $f_i$ is a Boolean function which is calculated modulo $2$, while the summation $\sum^{m}_{i=1}2^if_i$ is calculated modulo $2^m$. So the algebraic normal form of the general generalized Boolean function $f_g$ cannot be directly obtained from the weighted sum of the algebraic normal of the component Boolean function $f_i$. For example, let $f(x_1, x_2, x_3)=(f_1, f_2)$ be a multi-output Boolean function where
\begin{equation*}
\begin{aligned}
f_1 & =  x_1x_2, \\
f_2 & =  x_2+x_3.
\end{aligned}
\end{equation*}
It is clear $2f_2+f_1=x_1x_2+2x_2+2x_3$. However, from the truth table of the function $f(x_1, x_2, x_3)=(f_1, f_2)$, we have
$$f_g=2x_1x_2+2x_2x_3+x_2+x_3.$$

\begin{table}[htbp]

\begin{center}
\caption{Truth table of the function $f(x_1, x_2, x_3)=(f_1, f_2)$}\label{table:t1}
\begin{tabular}{ccc|c|c}
  \hline
  $x_{3}$ & $x_{2}$ & $x_{1}$ & $(f_1, f_2)$ & $f_g(x_1, x_2, x_3)$ \\
  \hline 0 & 0 & 0 & 00 & 0 \\
  \hline 0 & 0 & 1 & 00 & 0 \\
  \hline 0 & 1 & 0 & 01 & 1 \\
  \hline 0 & 1 & 1 & 11 & 3 \\
  \hline 1 & 0 & 0 & 01 & 1 \\
  \hline 1 & 0 & 1 & 01 & 1 \\
  \hline 1 & 1 & 0 & 00 & 0 \\
  \hline 1 & 1 & 1 & 10 & 2 \\
  \hline
\end{tabular}
\end{center}
\end{table}
For more information about correlation-immune Boolean and vectorial functions, we invite the readers to consult the excellent chapters provided by Carlet~\cite{Carlet2010Boolean,Carlet2010vectorial}.

\subsection{Correlation Immunity}

The multi-output CI Boolean functions are defined initially from the perspective of probability theory, which is similar to the definition of CI of single-output Boolean functions.

\begin{definition}\label{definition:d1}
Let $t$ be an integer such that $0 \leq t \leq n$. An $(n, m)$-function $f(\bm{x})$ is called $t$ th-order CI if its output distribution does not change when at most $t$ coordinates $x_i$ of $\bm{x}$ are kept constant. In other words,
\begin{flalign}\label{Pr}
\begin{split}
   P_r(f(x_1,x_2,\cdots,x_n) &=  (y_1,y_2,\cdots,y_m)|x_{i_j}=a_j,1\leq j\leq t) \\
   & =P_r(f(x_1,x_2,\cdots,x_n)=(y_1,y_2,\cdots,y_m))
\end{split}
\end{flalign}
for every $t$-subset $\{i_1,\cdots ,i_t\}\subseteq \{1,\cdots ,n\}$, $a_j\in\mathbb{F}_2(1\leq j\leq t)$, and $(y_1,y_2,\cdots,y_m)\in \mathbb{F}_2^m$.
\end{definition}

We will refer to a $t$ th-order CI $(n, m)$-function as $(n, m, t)$-CI function for simplicity. 
The {\em Walsh transform} of an $(n, m)$-function $f(\bm{x})$ is the function which maps any ordered pair $(\bm{u}, \bm{v}) \in \mathbb{F}_2^n \times \mathbb{F}_2^{m*}$ to the value at $\bm{u}$ of the Walsh transform of the component function $\bm{v} \cdot f(\bm{x})$,  that is,
\begin{equation}\label{walshtrans}
\hat{f}(\bm{u},\bm{v})=\sum_{\bm{x} \in \mathbb{F}_2^n}(-1)^{\bm{v} \cdot f(\bm{x}) +\bm{u}\cdot \bm{x} }.
\end{equation}

\begin{fact}\label{fact1}
An $(n, m)$-function is an $(n, m, t)$-CI function if and only if $\hat{f}(\bm{u},\bm{v})=0$ for $\bm{v}\neq \bm{0}$, $1 \leq wt(\bm{u}) \leq t$, where $wt(\bm{u})$ denotes the Hamming weight of $\bm{u}$.
\end{fact}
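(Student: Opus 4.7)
The plan is to reduce Fact~\ref{fact1} to the classical single-output Golomb--Xiao--Massey theorem recalled in the introduction. For each $\bm{v}\in\mathbb{F}_2^{m*}$, introduce the single-output component function $g_{\bm{v}}(\bm{x}):=\bm{v}\cdot f(\bm{x})$. Comparing with (\ref{walshtrans}), $\hat f(\bm{u},\bm{v})$ is literally the single-output Walsh transform of $g_{\bm{v}}$ evaluated at $\bm{u}$. Consequently, once I establish the reduction
$$f \text{ is } (n,m,t)\text{-CI} \iff g_{\bm{v}} \text{ is } t\text{th-order CI for every } \bm{v}\in\mathbb{F}_2^{m*},$$
Fact~\ref{fact1} follows by applying the classical Xiao--Massey characterization ($g$ is $t$th-order CI iff $\hat g(\bm u)=0$ for all $1\le wt(\bm u)\le t$) to each $g_{\bm{v}}$.

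The forward direction of the reduction is immediate: if fixing up to $t$ input coordinates leaves the distribution of $f(\bm{x})$ unchanged, then it leaves the distribution of any deterministic function of $f(\bm{x})$ unchanged, in particular that of $g_{\bm{v}}$. The reverse direction is the main technical step, and is where character theory on $\mathbb{F}_2^m$ enters. The key identity is the Fourier inversion
$$\mathbf{1}[f(\bm{x})=\bm{y}] \;=\; \frac{1}{2^m}\sum_{\bm{v}\in\mathbb{F}_2^m}(-1)^{\bm{v}\cdot(f(\bm{x})+\bm{y})}.$$
I would average the right-hand side once over all $\bm{x}\in\mathbb{F}_2^n$ and once over the affine subset $\{\bm{x}:x_{i_j}=a_j,\ 1\le j\le t\}$, so that both sides of (\ref{Pr}) are expressed as linear combinations of the (conditional or unconditional) averages of $(-1)^{\bm{v}\cdot f(\bm{x})}$. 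The $\bm{v}=\bm{0}$ term contributes identically in both cases; for each $\bm{v}\neq\bm{0}$, the $t$th-order CI hypothesis on $g_{\bm{v}}$ forces the conditional and unconditional averages of $(-1)^{g_{\bm{v}}(\bm{x})}$ to agree. Summing over $\bm{v}$ yields (\ref{Pr}) and shows $f$ is $(n,m,t)$-CI.

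The main obstacle is thus this reverse direction of the reduction, but it is handled cleanly by character orthogonality on $\mathbb{F}_2^m$. Everything else is bookkeeping: the identification $\hat f(\bm{u},\bm{v})=\hat g_{\bm{v}}(\bm{u})$ is by definition of the Walsh transform, and the single-output Golomb--Xiao--Massey theorem then supplies the Walsh-vanishing condition on $1\le wt(\bm{u})\le t$ for every $\bm{v}\neq\bm{0}$, giving Fact~\ref{fact1}.
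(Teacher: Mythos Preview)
Your proposal is correct. Note, however, that the paper does not actually prove Fact~\ref{fact1}: it is stated as a known result, and the introduction merely remarks that the multi-output characterization ``comes directly from the one of correlation immune Boolean functions,'' i.e., via the reduction to the component functions $\bm{v}\cdot f$---exactly the reduction you carry out. So your argument is the standard one the paper is implicitly invoking, and your use of character orthogonality on $\mathbb{F}_2^m$ for the reverse direction is the right way to make that reduction precise.
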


If we consider an $(n, m)$-function by a generalized Boolean Function, then equation~(\ref{Pr}) shall be rewritten as
\begin{flalign}
\begin{split}
   P_r(f_g(x_1,x_2,\cdots,x_n) &=  \alpha |x_{i_j}=a_j,1\leq j\leq t) \\
   & =P_r(f_g(x_1,x_2,\cdots,x_n)=\alpha)
\end{split}
\end{flalign}
for every $t$-subset $\{i_1,\cdots ,i_t\}\subseteq \{1,\cdots ,n\}$, $a_j\in\mathbb{F}_2(1\leq j\leq t)$, and $\alpha\in \mathbb{Z}_{2^m}$.

\section{New Characterizations}\label{section:s4}

In this section, we present three new characterizations for multi-output CI Boolean functions. Our first two characterizations shall consider the multi-output Boolean functions as generalized Boolean functions. The last characterization considers the multi-output Boolean functions from the perspective of component functions.

\subsection{The First Characterization}

We give a new method to characterize a multi-output CI Boolean function in terms of its corresponding generalized Boolean function's Walsh transform.

\begin{theorem}\label{theorem:t1}
Let $f_g(x_1,x_2,\cdots ,x_n)$ be a generalized Boolean function. Then $f_g(x_1,x_2,\cdots ,x_n)$ is an $(n, m, t)$-CI function if and only if
\begin{equation}\label{equation:e1}
  \sum_{\bm{x}\in \mathbb{F}_2^n}\omega_i^{f_g(\bm{x})}(-1)^{\bm{c}\cdot\bm{x}}=0,
\end{equation}
for $1\leq wt(\bm{c})\leq t$ and $1 \leq i \leq m$, where $\omega_i$ is a $2^ith$ primitive root of unity in the complex field.
\end{theorem}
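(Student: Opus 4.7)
The plan is to translate the correlation-immunity condition into a vanishing statement for the sub-sums
$$S_\alpha(\bm{c}) \;=\; \sum_{\substack{\bm{x}\in\mathbb{F}_2^n \\ f_g(\bm{x})=\alpha}}(-1)^{\bm{c}\cdot\bm{x}},\qquad \alpha\in\mathbb{Z}_{2^m},$$
and then reduce the theorem to an equivalent question about these sums. Expanding the indicator of $\{x_{i_j}=a_j,\ 1\le j\le t\}$ as $\prod_{j=1}^{t}\frac{1+(-1)^{x_{i_j}+a_j}}{2}$ and substituting into the probability version of the CI definition for $f_g$, a standard Walsh-inversion argument shows that $f_g$ is $(n,m,t)$-CI if and only if $S_\alpha(\bm{c})=0$ for every $\alpha\in\mathbb{Z}_{2^m}$ and every $\bm{c}$ with $1\le wt(\bm{c})\le t$.

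The $(\Rightarrow)$ direction is then immediate, since $\sum_{\bm{x}}\omega_i^{f_g(\bm{x})}(-1)^{\bm{c}\cdot\bm{x}}=\sum_{\alpha}\omega_i^\alpha S_\alpha(\bm{c})=0$ for every $i$. For the converse, fix $\bm{c}$ with $1\le wt(\bm{c})\le t$ and assume $\sum_\alpha\omega_i^\alpha S_\alpha(\bm{c})=0$ for $i=1,\dots,m$. I will prove by downward induction on $i$ that $S_\alpha(\bm{c})$ depends only on $\alpha\bmod 2^{i-1}$, the initial case (dependence mod $2^m$) being trivial. Assuming the dependence mod $2^i$, each of the $2^i$ residue classes is represented $2^{m-i}$ times, so the $i$th hypothesis simplifies to $\sum_{\beta=0}^{2^i-1}\omega_i^\beta S_\beta(\bm{c})=0$; splitting $\beta$ into the two halves $\{0,\dots,2^{i-1}-1\}$ and $\{2^{i-1},\dots,2^i-1\}$ and using $\omega_i^{2^{i-1}}=-1$ rewrites this as
$$\sum_{\beta=0}^{2^{i-1}-1}\omega_i^\beta\bigl(S_\beta(\bm{c})-S_{\beta+2^{i-1}}(\bm{c})\bigr)=0.$$
The crux is that $\{1,\omega_i,\omega_i^2,\dots,\omega_i^{2^{i-1}-1}\}$ is a $\mathbb{Q}$-basis of the cyclotomic field $\mathbb{Q}(\omega_i)$, since $\omega_i$ has minimal polynomial $\Phi_{2^i}(X)=X^{2^{i-1}}+1$ over $\mathbb{Q}$. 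Because each coefficient $S_\beta(\bm{c})-S_{\beta+2^{i-1}}(\bm{c})$ is an integer, linear independence forces them all to vanish, refining the dependence of $S_\alpha$ to mod $2^{i-1}$.

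Iterating the induction down to $i=1$ shows that $S_\alpha(\bm{c})$ is constant in $\alpha$. Since $\bm{c}\ne\bm{0}$ gives $\sum_\alpha S_\alpha(\bm{c}) = \sum_{\bm{x}}(-1)^{\bm{c}\cdot\bm{x}}=0$, this constant is zero, and the reformulation in the first paragraph delivers CI. The principal obstacle is this downward induction, which relies on simultaneously exploiting the integrality of the $S_\alpha(\bm{c})$ and the $\mathbb{Q}$-linear independence of powers of $\omega_i$ in $\mathbb{Q}(\omega_i)$; this interplay is precisely what allows only $m$ equations per weight pattern to do the work of the $2^m-1$ equations in the Golomb-Xiao-Massey characterization.
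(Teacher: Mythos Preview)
Your proof is correct and rests on the same algebraic core as the paper's: both reduce the statement to showing that the integers $S_\alpha(\bm{c})$ (equivalently $a_\alpha-b_\alpha$ in the paper's notation, the difference between the number of preimages of $\alpha$ with $\bm{c}\cdot\bm{x}=0$ and with $\bm{c}\cdot\bm{x}=1$) are all equal, and hence all zero. The paper does this in one stroke: since each $\omega_i$ is a root of the integer polynomial $P(z)=\sum_\alpha(a_\alpha-b_\alpha)z^\alpha$, every $\Phi_{2^i}(z)$ divides $P(z)$; as the $\Phi_{2^i}$ are pairwise coprime, their product $(z^{2^m}-1)/(z-1)=1+z+\cdots+z^{2^m-1}$ divides $P(z)$, and a degree comparison forces all coefficients of $P$ to coincide. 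Your downward induction unwinds precisely this divisibility one cyclotomic factor at a time, invoking the $\mathbb{Q}$-linear independence of $1,\omega_i,\dots,\omega_i^{2^{i-1}-1}$ (which is nothing other than the statement that $\Phi_{2^i}$ is the minimal polynomial of $\omega_i$). On the correlation-immunity side, the paper quotes the Xiao--Massey linear combination lemma to pass from ``$f_g$ is independent of $\bm{c}\cdot\bm{x}$ for all admissible $\bm{c}$'' to the definition of CI, whereas you carry out the equivalent Walsh-inversion expansion of the indicator $\prod_j\tfrac{1+(-1)^{x_{i_j}+a_j}}{2}$ directly. The two arguments are thus the same proof in different dress: the paper's product-of-cyclotomics step is slightly slicker, while your version is more self-contained and makes the role of each individual equation $i=1,\dots,m$ explicit.
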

We  introduce the {\lq linear combination lemma\rq} \cite{Xiao1988A,Brynielsson1989A} before proving theorem \ref{theorem:t1}.

\begin{fact}\label{lemma:l2}
\cite{Xiao1988A} The discrete random variable $Z$ is independent of the $k$ independent binary random variables $\mathbf{X}=(X_1,X_2,\cdots,X_k)$ if and only if $Z$ is independent of the sum $c_1X_1+c_2X_2+,\cdots,+c_kX_k$ for every choice of $\mathbf{c}=(c_1,c_2,\cdots,c_k)\in\mathbb{F}_2^{k*}$. 
\end{fact}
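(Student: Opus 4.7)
The plan is to combine the linear combination lemma (Fact~\ref{lemma:l2}) with a cyclotomic-polynomial divisibility argument, exploiting the integrality of the relevant counting differences.

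First I would apply Fact~\ref{lemma:l2} with $Z=f_g(\bm{x})$ and the binary variables chosen as any $t$-subset of coordinates, to reduce the $t$ th-order CI condition to: for every $\bm{c}\in\mathbb{F}_2^n$ with $1\le wt(\bm{c})\le t$, the random variable $f_g(\bm{x})$ is independent of $\bm{c}\cdot\bm{x}$. Since $\bm{c}\cdot\bm{x}$ is uniform on $\mathbb{F}_2$ when $\bm{c}\ne\bm{0}$, this independence is equivalent to the balance condition $D_\alpha:=N_\alpha^{(0)}-N_\alpha^{(1)}=0$ for every $\alpha\in\mathbb{Z}_{2^m}$, where $N_\alpha^{(y)}=\#\{\bm{x}:f_g(\bm{x})=\alpha,\;\bm{c}\cdot\bm{x}=y\}$. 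Grouping the sum in~(\ref{equation:e1}) by the value of $f_g(\bm{x})$ gives
\begin{equation*}
\sum_{\bm{x}\in\mathbb{F}_2^n}\omega_i^{f_g(\bm{x})}(-1)^{\bm{c}\cdot\bm{x}}=\sum_{\alpha\in\mathbb{Z}_{2^m}}\omega_i^{\alpha}D_\alpha,
\end{equation*}
so the forward implication (CI $\Rightarrow$ vanishing sums) is immediate.

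The hard part will be the converse. Fixing $\bm{c}$ with $1\le wt(\bm{c})\le t$, I would introduce the integer-coefficient polynomial $P(X)=\sum_{\alpha=0}^{2^m-1}D_\alpha X^\alpha\in\mathbb{Z}[X]$, whose degree is at most $2^m-1$. The hypothesis that the sum in~(\ref{equation:e1}) vanishes for $i=1,\dots,m$ reads $P(\omega_i)=0$, and $\omega_i$ is a primitive $2^i$-th root of unity whose minimal polynomial over $\mathbb{Q}$ is the cyclotomic polynomial $\Phi_{2^i}$. Since $P\in\mathbb{Z}[X]\subseteq\mathbb{Q}[X]$ and the $\Phi_{2^i}$ are pairwise coprime, their product divides $P$. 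Using $X^{2^m}-1=\prod_{d\mid 2^m}\Phi_d(X)$, I would compute
\begin{equation*}
\prod_{i=1}^{m}\Phi_{2^i}(X)=\frac{X^{2^m}-1}{X-1}=1+X+X^2+\cdots+X^{2^m-1},
\end{equation*}
a polynomial of degree exactly $2^m-1$. It follows that $P(X)=C\,(1+X+\cdots+X^{2^m-1})$ for some constant $C$, so all $D_\alpha$ coincide.

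To close I would evaluate $P(1)$ in two ways: combinatorially it equals $\#\{\bm{x}:\bm{c}\cdot\bm{x}=0\}-\#\{\bm{x}:\bm{c}\cdot\bm{x}=1\}=2^{n-1}-2^{n-1}=0$ because $\bm{c}\ne\bm{0}$, whereas from the factorization it equals $2^{m}C$. Hence $C=0$, so every $D_\alpha=0$, giving the desired independence of $f_g(\bm{x})$ from $\bm{c}\cdot\bm{x}$, and a second appeal to Fact~\ref{lemma:l2} returns the $(n,m,t)$-CI property. The only non-routine ingredient is the cyclotomic divisibility step: the integrality of the counting differences is essential, because it is what turns $m$ complex vanishing conditions into divisibility by a polynomial of degree $2^m-1$, just enough to force all $D_\alpha$ to be equal, after which the automatic balance $\sum_\alpha D_\alpha=0$ makes them vanish.
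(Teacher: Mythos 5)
Your proposal does not prove the statement it was assigned. The statement is Fact~\ref{lemma:l2} itself --- the Xiao--Massey ``linear combination lemma'', a purely probabilistic assertion: a discrete random variable $Z$ is independent of $k$ independent binary random variables $X_1,\dots,X_k$ if and only if $Z$ is independent of every nonzero $\mathbb{F}_2$-linear combination $c_1X_1+\cdots+c_kX_k$. What you have written is instead a proof of Theorem~\ref{theorem:t1} (the Walsh-transform characterization of $(n,m,t)$-CI generalized Boolean functions), and your very first step is ``apply Fact~\ref{lemma:l2}''; as an argument for Fact~\ref{lemma:l2} this is circular, and as an argument for anything else it is simply aimed at the wrong target. (In the paper, Fact~\ref{lemma:l2} is quoted from Xiao and Massey without proof and is then used, together with the cyclotomic-divisibility argument you reproduce, to prove Theorem~\ref{theorem:t1}; your text essentially duplicates that proof, with a slightly cleaner finish via evaluating $P(1)$, but that is not the task here.)

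A genuine proof of Fact~\ref{lemma:l2} would have to address the nontrivial direction: assuming $Z$ is independent of $\bm{c}\cdot\mathbf{X}$ for every $\bm{c}\in\mathbb{F}_2^{k*}$, show that $Z$ is independent of the whole vector $\mathbf{X}$. The standard route is a character-sum (finite Fourier) argument on $\mathbb{F}_2^k$: for each value $z$ of $Z$, the hypothesis gives
\[
E\bigl[(-1)^{\bm{c}\cdot\mathbf{X}}\mid Z=z\bigr]=E\bigl[(-1)^{\bm{c}\cdot\mathbf{X}}\bigr]
\quad\text{for all }\bm{c}\neq\bm{0},
\]
while the case $\bm{c}=\bm{0}$ holds trivially; since the characters $(-1)^{\bm{c}\cdot\bm{x}}$ form a basis of the functions on $\mathbb{F}_2^k$, inverting the transform yields $P_r(\mathbf{X}=\bm{a}\mid Z=z)=P_r(\mathbf{X}=\bm{a})$ for every $\bm{a}$, i.e.\ independence. (One also needs the independence of the $X_i$ only to identify $E[(-1)^{\bm{c}\cdot\mathbf{X}}]$ as a product, or, in Xiao--Massey's original formulation, to run an induction on $k$.) None of this appears in your proposal, so the essential content of the assigned statement is missing.
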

%
Now, we shall prove Theorem \ref{theorem:t1} by using Fact \ref{lemma:l2}.
\begin{proof}
The equation (\ref{equation:e1}) can be divided into two parts. One is for $\bm{c}\cdot\bm{x}=0$, and the other is for $\bm{c}\cdot\bm{x}=1$, i.e.,
\begin{equation}\label{equation:e2}
  \sum_{\bm{c}\cdot\bm{x}=0}\omega_i^{f_g(\bm{x})}-\sum_{\bm{c}\cdot\bm{x}=1}\omega_i^{f_g(\bm{x})}=0.
\end{equation}
We denote that
\[a_{\alpha} =\# \{\bm{x}:f_g(\bm{x})=\alpha,\bm{c}\cdot\bm{x}=0\},\]
and
\[b_{\alpha} =\# \{\bm{x}:f_g(\bm{x})=\alpha,\bm{c}\cdot\bm{x}=1\},\]
where $0 \leq \alpha \leq 2^m-1$.
Hence, equation (\ref{equation:e2}) is equivalent to
\begin{equation*}
  \sum_{\alpha=0}^{2^m-1}a_{\alpha}\omega_i^{\alpha}-\sum_{\alpha=0}^{2^m-1}b_{\alpha}\omega_i^{\alpha}=0 \Longleftrightarrow \sum_{\alpha=0}^{2^m-1}(a_{\alpha}-b_{\alpha})\omega_i^{\alpha}=0.
\end{equation*}
For any integer $d$, let $\Phi_d(z)$ be the  $d$th {\em cyclotomic polynomial} \cite{mceliece1987finite}, which is a monic polynomial of degree $\phi(d)$ (Euler function). It is known that
\[\Phi_{2^i}(z)=\prod \{(z-\xi^{j}):0\leq j\leq 2^n-1,\mbox{gcd}(j,2^n)=2^{n-i}\},\]
where gcd denotes the great common divisor and $\xi=exp(\frac{2\pi \sqrt{-1}}{2^n})$. We have
\[\Phi_2(z)=z+1,\Phi_4(z)=z^2+1,\cdots,\Phi_{2^m}(z)=z^{2^{m-1}}+1.\]
For $1 \leq i \leq m$, $\Phi_{2^i}(z)$ is a monic polynomial with integer coefficients that is the minimal polynomial over the rational field of any primitive $2^i$th root of unity. Since $\omega_i$ is a $2^ith$ primitive root of unity in the complex field, and $\Phi_{2^i}(z)$ is irreducible in the integer ring, then every $\Phi_{2^i}(z)$ divide $\sum_{\alpha=0}^{2^m-1}(a_{\alpha}-b_{\alpha})z^{\alpha}$. In addition, $\Phi_{2^i}(z)$ are pairwise coprime for $1 \leq i \leq m$.
 Let $h(z)$ denote the product of $\Phi_{2^i}(z)$ for $1 \leq i \leq m$, i.e.,
$$h(z)=(z+1)(z^2+1)\cdots(z^{2^{m-1}}+1)=\frac{z^{2^m}-1}{z-1}=1+z+z^2+\cdots+z^{2^m-1}.$$
Note that $\sum_{\alpha=0}^{2^m-1}(a_{\alpha}-b_{\alpha})z^{\alpha}$ must be a multiple of $h(z)$, and $\mbox{deg}(\sum_{\alpha=0}^{2^m-1}(a_{\alpha}-b_{\alpha})z^{\alpha})=\mbox{deg}(h(z))$, we obtain that $$a_1-b_1=a_2-b_2=\cdots=a_{2^m-1}-b_{2^m-1}.$$
Since \[\sum_{\alpha=0}^{2^m-1}a_\alpha=\sum_{\alpha=0}^{2^m-1}b_\alpha=2^{n-1}\Rightarrow \sum_{\alpha=0}^{2^m-1}(a_\alpha-b_\alpha)=0,\]
we obtain that
$$a_\alpha-b_\alpha=0\Rightarrow a_\alpha=b_\alpha.$$
Thus, $f_g(\bm{x})$ is independent of $\bm{c}\cdot\bm{x}$ for $1\leq wt(\bm{c})\leq t$. Then we get $f_g(\bm{x})$ is independent of $x_{i_1},x_{i_2},\cdots,x_{i_t}$ according to Fact \ref{lemma:l2}. In other words,
\[P_r \left( f_g(\bm{x})=\alpha|x_{i_1}, x_{i_2} \cdots x_{i_t}\right) =P_r \left(f_g(\bm{x})=\alpha\right).\]
which is exactly the definition of the $(n, m, t)$-CI function.
\qed
\end{proof}

Compared to the previous Walsh spectral characterization (Fact \ref{fact1}), this characterization reduces the computational complexity from $(2^m-1)\sum^t_{j=1}\binom{n}{j}$  to $m\sum^t_{j=1}\binom{n}{j}$.

\subsection{The Second Characterization}\label{section:s5}


The second characterization is in terms of Fourier spectra of sequences described by the corresponding generalized Boolean functions. We first introduce the concept of the discrete Fourier transform (DFT) over the complex field of the sequences. Note that DFT over the complex field introduced here is the traditional DFT, which is different from the DFT over the finite field \cite{Golomb2005Signal}.

We describe a sequence $\bm{f}_g$ of length $2^n$ corresponding to a generalized Boolean function $f_g$ by listing the values taken by $f_g(x_1,x_2,\cdots ,x_n)$ as $(x_1,x_2,\cdots ,x_n)$ which ranges over all its $2^n$ values in lexicographic order. In other words, sequence $\bm{f}_g$ is defined by
$$\bm{f}_g=(f_g(0),f_g(1),\cdots ,f_g(2^n-1)),$$
where $f_g(k)=f_g(x_1,x_2,\cdots ,x_n)$ and $(x_1,x_2,\cdots ,x_n)$ is the binary representation of the integer $k$  for $0\leq k \leq 2^n-1$, i.e., $k=\sum_{i=1}^n x_i 2^{i-1}$. For example, for $n=3$ and $m=2$ we have $\bm{3x_1x_2x_3}=(000000023)$ and $\bm{2x_1x_2+3x_3+1}=(10101032)$ respectively.

Let $\omega_i$ be a $2^ith$ primitive root of unity over the complex field for $1\leq i \leq m$. The polynomials associated with sequences (every sequence defined by the generalized Boolean function $f_g$) are given by
\begin{equation}\label{Z-transform}
F^{\{i\}}(z)=\sum^{2^n-1}_{k=0}\omega_i^{f_g(k)}z^k, 1\leq i \leq m.
\end{equation}

\begin{definition}\label{def:d2}
Let $\xi=exp(\frac{2\pi\sqrt{-1}}{2^n})$ be a $2^n$th primitive root of unity over the complex field. The {\em discrete Fourier transform } (DFT) of sequences (every sequence defined by the generalized Boolean function $f_g$) over the complex field are defined by
\begin{equation}\label{DFT}
\mathcal{F}^{\{i\}}_{f_g}(j) =\sum^{N-1}_{k=0}\omega_i^{f_g(k)}\xi^{-kj}, 0\leq j \leq N-1,
\end{equation}
where $1\leq i \leq m$, $\omega_i$ is a $2^ith$ primitive root of unity over the complex field. 
\end{definition}
It is obvious that the equation (\ref{DFT}) is the DFT of a sequence defined by a Boolean function when $m=1$.
Let $\pi \cdot f_g=f_g(x_{\pi(1)},x_{\pi(2)},\cdots ,x_{\pi(n)})$ be a function obtained by permuting the variables in $f_g$, and $\pi \cdot F(z)$ be the polynomial associated with the function $\pi \cdot f_g$. Then the Fourier spectral characterization is given below.
\begin{theorem}\label{theorem:t2}
Let $f_g(x_1,x_2,\cdots ,x_n)$ be a generalized Boolean function. Then $f_g(x_1,x_2,\cdots ,x_n)$ is an $(n, m, t)$-CI function if and only if  $$\mathcal{F}^{\{i\}}_{\pi \cdot f_g}(2^{n-t})=0,$$ for $\forall \pi \in S_n$ and $1\leq i \leq m$.
\end{theorem}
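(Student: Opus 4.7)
The plan is to compute $\mathcal{F}^{\{i\}}_{\pi\cdot f_g}(2^{n-t})$ explicitly in terms of the bit variables of $\bm y$, and then handle each direction separately.

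Since $\xi^{-k\cdot 2^{n-t}} = e^{-2\pi\sqrt{-1}\,k/2^t}$ depends only on $k \bmod 2^t$, writing $k = \sum_{j=1}^n x_j\, 2^{j-1}$ gives $\xi^{-k\cdot 2^{n-t}} = \prod_{j=1}^t \overline{\omega}_{t-j+1}^{x_j}$, where $\overline{\omega}_k$ denotes the complex conjugate of $\omega_k$. Changing variables $y_j = x_{\pi(j)}$ and setting $I = \{\pi^{-1}(1),\ldots,\pi^{-1}(t)\}$ with the bijection $r : I \to \{1,\ldots,t\}$ defined by $r(\pi^{-1}(j)) = t-j+1$ yields
\begin{equation*}
\mathcal{F}^{\{i\}}_{\pi\cdot f_g}(2^{n-t}) \;=\; \sum_{\bm y \in \mathbb{F}_2^n}\omega_i^{f_g(\bm y)}\prod_{k \in I}\overline{\omega}_{r(k)}^{y_k}.
\end{equation*}
As $\pi$ ranges over $S_n$, the pair $(I, r)$ ranges over every $t$-subset $I$ of $\{1,\ldots,n\}$ together with every bijection $r : I \to \{1,\ldots,t\}$.

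For the forward direction I partition the sum by $\bm y|_I = \bm a$ and let $N_I(\bm a,\alpha) = \#\{\bm y : \bm y|_I = \bm a,\, f_g(\bm y) = \alpha\}$. The $(n,m,t)$-correlation immunity assumption makes this count independent of $\bm a$, and the DFT factors as $\frac{1}{2^t}\bigl(\sum_\alpha \omega_i^\alpha N(\alpha)\bigr)\prod_{k \in I}(1+\overline{\omega}_{r(k)})$, which vanishes because $r$ is surjective onto $\{1,\ldots,t\}$ and so one factor equals $1+\overline{\omega}_1 = 0$. For the reverse direction, the vanishing DFT rewrites as $\sum_\alpha \omega_i^\alpha Q_r(\alpha) = 0$ for every $i \in \{1,\ldots,m\}$, with $Q_r(\alpha) := \sum_{\bm a}\prod_k \overline{\omega}_{r(k)}^{a_k}\,N_I(\bm a,\alpha)$. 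The cyclotomic-polynomial argument from the proof of Theorem~\ref{theorem:t1} transfers verbatim: the polynomial $\sum_\alpha Q_r(\alpha) z^\alpha$ is divisible by every $\Phi_{2^i}(z)$ and hence by $h(z) = 1+z+\cdots+z^{2^m-1}$, and the identity $\sum_\alpha Q_r(\alpha) = 2^{n-t}\prod_k(1+\overline{\omega}_{r(k)}) = 0$ then forces $Q_r(\alpha) = 0$ for every $\alpha$, $I$, and $r$.

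It therefore suffices to prove the following combinatorial lemma: if a non-negative integer-valued function $N$ on $\mathbb{F}_2^t$ satisfies $\sum_{\bm a}\prod_k\overline{\omega}_{r(k)}^{a_k}\,N(\bm a) = 0$ for every bijection $r:\{1,\ldots,t\}\to\{1,\ldots,t\}$, then $N$ is constant; applying this to $N_I(\cdot,\alpha)$ for each $\alpha$ recovers $(n,m,t)$-correlation immunity. I would attack the lemma by induction on $t$, peeling off the coordinate $k^* = r^{-1}(1)$: the split of the sum by $a_{k^*}$ produces relations on the integer-valued slice difference $N|_{a_{k^*}=0} - N|_{a_{k^*}=1}$ with characters now drawn from the roots $\overline{\omega}_2,\ldots,\overline{\omega}_t$. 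The main obstacle is precisely this inductive step: once $\overline{\omega}_1$ is removed from the alphabet, the clean cancellation $1+\overline{\omega}_1 = 0$ is no longer available, so the recursion must instead leverage integrality of the slice difference and the Galois-arithmetic independence of the remaining roots, by separating each complex relation into real and imaginary parts and checking that the resulting real linear system on the non-constant subspace has full rank $2^t - 1$.
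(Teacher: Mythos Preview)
Your forward direction is fine: the factorization $\prod_{k\in I}(1+\overline{\omega}_{r(k)})$ with the factor $1+\overline{\omega}_1=0$ kills the DFT. The reverse direction, however, has two genuine gaps.

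\textbf{The cyclotomic argument does not transfer verbatim.} In Theorem~\ref{theorem:t1} the coefficients $a_\alpha-b_\alpha$ are integers, so $P(z)=\sum_\alpha(a_\alpha-b_\alpha)z^\alpha\in\mathbb{Z}[z]$ and $P(\omega_i)=0$ forces $\Phi_{2^i}(z)\mid P(z)$ because $\Phi_{2^i}$ is the minimal polynomial of $\omega_i$ over~$\mathbb{Q}$. Your $Q_r(\alpha)=\sum_{\bm a}\prod_k\overline{\omega}_{r(k)}^{a_k}N_I(\bm a,\alpha)$ lives in the cyclotomic ring $\mathbb{Z}[\omega_t]$, not in~$\mathbb{Z}$. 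Over $\mathbb{Q}(\omega_t)$ the polynomial $\Phi_{2^i}(z)$ is no longer irreducible when $i\le t$ (it splits completely), so $P(\omega_i)=0$ only yields the single linear factor $z-\omega_i$, and the product of the factors you actually obtain has degree far smaller than $\deg h(z)=2^m-1$. The degree-matching step therefore fails, and you cannot conclude $Q_r(\alpha)=0$.

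\textbf{The combinatorial lemma is only asserted.} Even granting $Q_r(\alpha)=0$, you reduce to showing that the $t!$ complex linear forms $N\mapsto\sum_{\bm a}\prod_k\overline{\omega}_{r(k)}^{a_k}N(\bm a)$ have common kernel only the constants. You describe an inductive plan but do not carry it out; the phrase ``checking that the resulting real linear system has full rank $2^t-1$'' is exactly the content that is missing.

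The paper sidesteps both issues by \emph{not} substituting the root $\xi^{-2^{n-t}}$. Instead it keeps $z$ as a formal variable: the vanishing $\mathcal{F}^{\{i\}}_{\pi\cdot f_g}(2^{n-t})=0$ is equivalent to $\Phi_{2^t}(z)\mid(\pi\cdot F^{\{i\}})(z)$, and reducing $F^{\{i\}}(z)$ modulo $z^{2^{t-1}}+1$ produces a polynomial of degree $<2^{t-1}$ whose $2^{t-1}$ coefficients are indexed precisely by the tuples $(x_1,\ldots,x_{t-1})$. Setting each coefficient to zero gives, for every fixed $(x_1,\ldots,x_{t-1})$, an equation $\sum_\alpha(a_\alpha-b_\alpha)\omega_i^\alpha=0$ with \emph{integer} $a_\alpha,b_\alpha$. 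Now the Theorem~\ref{theorem:t1} argument applies cleanly, yielding conditional independence of $f_g$ from $x_t$ given $x_1,\ldots,x_{t-1}$; iterating with transpositions $\pi=(s,t)$ strips the remaining conditioning. The polynomial coefficient comparison is doing the work of your unproved lemma for free.
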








\begin{proof}
Recall the polynomials $F^{\{i\}}(z)$ in equation (\ref{Z-transform}) and the definition of DFT in Definition \ref{def:d2}, we have
$$\mathcal{F}^{\{i\}}_{\pi \cdot f_g}(2^{n-t})=\pi \cdot F^{\{i\}}(\xi^{-2^{n-t}}).$$
Since the minimal polynomial of $\xi^{-2^{n-t}}$ over the rational field is $\Phi_{2^t}(z)$, we obtain that $\mathcal{F}^{\{i\}}_{\pi \cdot f_g}(2^{n-t})=\pi \cdot F^{\{i\}}(\xi^{-2^{n-t}})=0$ is equivalent to the fact that $\Phi_{2^t}(z)|(\pi \cdot F^{\{i\}}(z))$. We first consider permutation $\pi$ to be identity. Since
\begin{equation*}
F^{\{i\}}(z)=\sum^{2^n-1}_{k=0}\omega_i^{f_g(k)}z^k=\sum_{\bm{x}\in \mathbb{F}^n_2}\omega_i^{f_g(\bm{x})}\prod^n_{i=1}(z^{2^{i-1}})^{x_i}, 1 \leq i \leq m,
\end{equation*}
we have
$$ \Phi_{2^t}(z)|F^{\{i\}}(z)\Longleftrightarrow F^{\{i\}}(z)\equiv 0~(\mbox{mod}\ \Phi_{2^t}(z))\Longleftrightarrow \sum_{\bm{x}\in \mathbb{F}^n_2}\omega_i^{f_g(\bm{x})}\prod^n_{i=1}(z^{2^{i-1}})^{x_i}\equiv 0 ~(\mbox{mod}\  \Phi_{2^t}(z)).$$
From the definition of the cyclotomic polynomial, we know
$\Phi_{2^t}(z)=z^{2^{t-1}}+1$, so
$$\Phi_{2^t}(z)|z^{2^i}-1, \mbox{for} \ \forall i\geq t. $$
Then we have
\begin{equation}\label{middle}
\Phi_{2^t}(z)|F^{\{i\}}(z)\Longleftrightarrow \sum_{\bm{x}\in \mathbb{F}^n_2}\omega_i^{f_g(\bm{x})}\prod^t_{i=1}(z^{2^{i-1}})^{x_i}\equiv 0 ~(\mbox{mod}\  z^{2^{t-1}}+1).
\end{equation}
Then the summation in (\ref{middle}) can be divided into two parts, where the first part is for $x_t=0$ and the second part is for $x_t=1$. Hence $\Phi_{2^t}(z)|F^{\{i\}}(z)$, $1\leq i \leq m$, is equivalent to
\[\sum_{x_1,...,x_{t-1},x_t=0}\omega_i^{f_g(\bm{x})}\prod^{t-1}_{i=1}(z^{2^{i-1}})^{x_i}
 -\sum_{x_1,...,x_{t-1},x_t=1}\omega_i^{f_g(\bm{x})}\prod^{t-1}_{i=1}(z^{2^{i-1}})^{x_i}=0.\]
Combining like terms about $z$, the above condition is equivalent to

\[\sum_{x_1,...,x_{t-1}} \left(\sum_{x_t=0,x_{t+1},...,x_n}\omega_i^{f_g(\bm{x})}
                                          - \sum_{x_t=1,x_{t+1},...,x_n}\omega_i^{f_g(\bm{x})} \right) (z^{2^{i-1}})^{x_i}=0, \]
so the coefficients of $(z^{2^{i-1}})^{x_i}$ are
\begin{equation}\label{eq:wi}
  \sum_{x_t=0,x_{t+1},\cdots ,x_n} \omega_i^{f_g(\bm{x})}- \sum_{x_t=1,x_{t+1},\cdots,x_n}\omega_i^{f_g(\bm{x})}=0.
\end{equation}
Now, we denote that
\[a_{\alpha} =\# \{\bm{x}:f_g(\bm{x})=\alpha,x_t=0,x_{t+1},\cdots,x_n\},\]
and
\[b_{\alpha} =\# \{\bm{x}:f_g(\bm{x})=\alpha,x_t=1,x_{t+1},\cdots,x_n\},\]
where $0 \leq \alpha \leq 2^m-1$. Thus, equation (\ref{eq:wi}) is equivalent to
\begin{equation*}
  \sum_{\alpha=0}^{2^m-1}a_{\alpha}\omega_i^{\alpha}-\sum_{\alpha=0}^{2^m-1}b_{\alpha}\omega_i^{\alpha}=0 \Longleftrightarrow \sum_{\alpha=0}^{2^m-1}(a_{\alpha}-b_{\alpha})\omega_i^{\alpha}=0
\end{equation*}
Since $\omega_i$ is a $2^ith$ primitive root of unity in the complex field, and $\Phi_{2^i}(z)$ is irreducible in the integer ring, then every $\Phi_{2^i}(z)$ divide $\sum_{\alpha=0}^{2^m-1}(a_{\alpha}-b_{\alpha})z^{\alpha}$. In addition, $\Phi_{2^i}(z)$ are pairwise coprime for $1 \leq i \leq m$. Let $h(z)$ denote the product of all $\Phi_{2^i}(z)$, $1 \leq i \leq m$,
$$h(z)=(z+1)(z^2+1)\cdots(z^{2^{m-1}}+1)=\frac{z^{2^m}-1}{z-1}=1+z+z^2+\cdots+z^{2^m-1}.$$
Note that $\sum_{\alpha=0}^{2^m-1}(a_{\alpha}-b_{\alpha})z^{\alpha}$ must be a multiple of $h(z)$, and $\mbox{deg}(\sum_{\alpha=0}^{2^m-1}(a_{\alpha}-b_{\alpha})z^{\alpha})=\mbox{deg}(h(z))$, we obtain that $$a_1-b_1=a_2-b_2=\cdots=a_{2^m-1}-b_{2^m-1}.$$
Since \[\sum_{\alpha=0}^{2^m-1}a_\alpha=\sum_{\alpha=0}^{2^m-1}b_{\alpha}=2^{n-t}\Rightarrow \sum_{\alpha=0}^{2^m-1}(a_{\alpha}-b_{\alpha})=0,\]
we obtain that $a_{\alpha}-b_{\alpha}=0$.
In other words,
$$P_r \left(f_g(\bm{x})=\alpha|x_t=0,x_1,\cdots ,x_{t-1}\right)=P_r \left(f_g(\bm{x})=\alpha|x_t=1,x_1,\cdots ,x_{t-1} \right)$$
for $\forall t$ and $\forall \alpha$, i.e.,
$$P_r \left(f_g(\bm{x})=\alpha|x_1,\cdots ,x_{t-1},x_t \right)=P_r \left(f_g(\bm{x})=\alpha|x_1,\cdots ,x_{t-1}\right).$$
%
For $1\leq s\leq t-1$, let $\pi = (s, t)$ denote a transposition. Such a permutation exchange the place of two elements $s$ and $t$, leaving the others fixed. $\Phi_{2^t}(z)|(\pi \cdot F^{\{i\}})(z)$ for any $\pi=(s,t)$ is equivalent to the fact that $P_r \left( f_g(\bm{x})=\alpha\right)$ does not depend on the values of $x_1, x_2, \cdots x_t$, i.e,
\[P_r \left( f_g(\bm{x})=\alpha|x_1, x_2 \cdots x_t\right) =P_r \left(f_g(\bm{x})=\alpha\right).\]
Then considering any permutation $\pi\in S_n$, we obtain
$$P_r \left( f_g(\bm{x})=\alpha|x_{\pi(1)}, x_{\pi(2)}\cdots,x_{\pi(t)} \right)=P_r \left( f_g(\bm{x})=\alpha \right),$$
which is exactly the definition of the $(n, m, t)$-CI function.
%
\qed
\end{proof}

\begin{definition}
A generalized Boolean function $f_g$ is called a {\em symmetric function} if permuting its variables $(x_1,x_2,\cdots ,x_n)$ leads to itself.
\end{definition}
For symmetric function $f_g$, since $f_g=\pi \cdot f_g$ for any permutation $\pi \in S_n$, the second characterization for the symmetric functions is much simpler. Only $m$ points of Fourier spectra should be calculated.
\begin{corollary}\label{c1}
Let $f_g(x_1,x_2,\cdots ,x_n)$ be a symmetric generalized Boolean function. Then $f_g(x_1,x_2,\cdots ,x_n)$ is an $(n, m, t)$-CI function if and only if  \[\mathcal{F}^{\{i\}}_{f_g}(2^{n-t})=0, \] for $1 \leq i\leq m$.
\end{corollary}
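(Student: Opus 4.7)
The plan is to obtain this corollary as an immediate specialization of Theorem \ref{theorem:t2}, where the key simplification comes from the symmetry hypothesis collapsing the family of conditions indexed by $\pi\in S_n$ down to a single condition per $i$.

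First I would invoke Theorem \ref{theorem:t2}, which asserts that $f_g$ is an $(n,m,t)$-CI function if and only if $\mathcal{F}^{\{i\}}_{\pi \cdot f_g}(2^{n-t})=0$ for every $\pi\in S_n$ and every $1\leq i\leq m$. The goal is to show that, under the symmetry hypothesis, all $n!$ conditions (for fixed $i$) coincide with the single condition $\mathcal{F}^{\{i\}}_{f_g}(2^{n-t})=0$.

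To this end, I would use the definition of a symmetric generalized Boolean function: for every $\pi\in S_n$,
\[
(\pi\cdot f_g)(x_1,\dots,x_n)=f_g(x_{\pi(1)},\dots,x_{\pi(n)})=f_g(x_1,\dots,x_n).
\]
Since $\pi\cdot f_g$ and $f_g$ are identical as functions on $\mathbb{F}_2^n$, the sequences $\bm{\pi\cdot f_g}$ and $\bm{f_g}$ coincide, and hence from the definition of the DFT in Definition \ref{def:d2}, $\mathcal{F}^{\{i\}}_{\pi\cdot f_g}(j)=\mathcal{F}^{\{i\}}_{f_g}(j)$ for every $j$, in particular at $j=2^{n-t}$. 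Consequently, the condition of Theorem \ref{theorem:t2} reduces to the single spectral condition $\mathcal{F}^{\{i\}}_{f_g}(2^{n-t})=0$ for $1\leq i\leq m$, proving the equivalence stated in the corollary.

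There is no real obstacle here: the only subtle point to confirm is the literal reading of the symmetry definition, namely that permuting the arguments of $f_g$ leaves the function value unchanged at every point of $\mathbb{F}_2^n$ (equivalently, at every integer $0\leq k\leq 2^n-1$ used in the sequence description of $\bm{f_g}$). Once this is made explicit, the corollary follows in a couple of lines from Theorem \ref{theorem:t2}.
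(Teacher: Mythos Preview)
Your proposal is correct and matches the paper's own justification: the paper notes that for a symmetric $f_g$ one has $f_g=\pi\cdot f_g$ for every $\pi\in S_n$, so Theorem~\ref{theorem:t2} immediately collapses to the $m$ spectral conditions $\mathcal{F}^{\{i\}}_{f_g}(2^{n-t})=0$.
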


%

\subsection{The Third Characterization}\label{section:s6}

The Fourier spectral characterization in section \ref{section:s5} is to regard the multi-output Boolean function as a generalized Boolean function. In this section, we give another Fourier spectral characterization for multi-output CI Boolean functions by the Fourier transform of component functions.

In paper~\cite{WangDiscrete2018}, Wang and Gong investigated the Fourier spectral characterizations of CI Boolean functions. Theorem 4 in \cite{WangDiscrete2018} showed that a Boolean function is $t$ th-order CI if and only if its Fourier spectrum vanishes at a special point for any permutation $\pi$.

\begin{fact}
\cite{WangDiscrete2018} A Boolean function $f_b$ is an $(n,1,t)$-CI function 
if and only if $\mathcal{F}_{\pi \cdot f_b}(2^{n-t})=0$ for $\forall \pi \in S_n$.
\end{fact}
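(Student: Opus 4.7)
The plan is to specialize the proof of Theorem~\ref{theorem:t2} to the single-output case $m=1$. In this case the generalized Boolean function coincides with the ordinary Boolean function $f_b$, the only primitive root of unity that appears is $\omega_1=-1$, and the DFT reduces to $\mathcal{F}_{f_b}(j)=\sum_{k=0}^{2^n-1}(-1)^{f_b(k)}\xi^{-kj}$. Accordingly, I would associate to $f_b$ the polynomial $F(z)=\sum_{k=0}^{2^n-1}(-1)^{f_b(k)}z^k$, so that $\mathcal{F}_{\pi\cdot f_b}(2^{n-t})=(\pi\cdot F)(\xi^{-2^{n-t}})$, and then work with $F(z)$ modulo the cyclotomic polynomial $\Phi_{2^t}(z)=z^{2^{t-1}}+1$.

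The first step is to note that $\xi^{-2^{n-t}}$ is a primitive $2^t$-th root of unity, hence its minimal polynomial over $\mathbb{Q}$ is $\Phi_{2^t}(z)$; thus $\mathcal{F}_{\pi\cdot f_b}(2^{n-t})=0$ is equivalent to $\Phi_{2^t}(z)\mid(\pi\cdot F)(z)$. Taking $\pi$ to be the identity, I would use $z^{2^i}\equiv 1\pmod{\Phi_{2^t}(z)}$ for $i\geq t$ to kill the factors with $i>t$, and then split the remaining sum according to the value of $x_t$. Invoking $z^{2^{t-1}}\equiv -1$, the divisibility condition reduces to
\begin{equation*}
\sum_{x_1,\ldots,x_{t-1}}\Bigl(\sum_{x_t=0,x_{t+1},\ldots,x_n}\!\!(-1)^{f_b(\bm{x})}-\sum_{x_t=1,x_{t+1},\ldots,x_n}\!\!(-1)^{f_b(\bm{x})}\Bigr)\prod_{i=1}^{t-1}(z^{2^{i-1}})^{x_i}=0.
\end{equation*}
Because the monomials $\prod_{i=1}^{t-1}(z^{2^{i-1}})^{x_i}$ for $(x_1,\ldots,x_{t-1})\in\mathbb{F}_2^{t-1}$ are precisely $z^0,z^1,\ldots,z^{2^{t-1}-1}$, each coefficient must vanish; this is exactly the statement that $f_b$ is independent of $x_t$ given $(x_1,\ldots,x_{t-1})$.

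The main obstacle, and the role played by the universal quantifier over $\pi\in S_n$, is promoting this conditional independence from a single distinguished coordinate to any $t$-subset of input variables. Following the final part of the proof of Theorem~\ref{theorem:t2}, I would first invoke the transpositions $\pi=(s,t)$ for $1\leq s\leq t-1$, together with the identity, to conclude that the distribution of $f_b$ is unaffected by each of $x_1,\ldots,x_t$ individually and therefore by the full block $(x_1,\ldots,x_t)$; I would then apply an arbitrary $\pi\in S_n$ to move this distinguished block onto any $t$-subset of $\{x_1,\ldots,x_n\}$. This yields
\begin{equation*}
P_r(f_b(\bm{x})=\alpha\mid x_{\pi(1)},\ldots,x_{\pi(t)})=P_r(f_b(\bm{x})=\alpha),
\end{equation*}
which is precisely the definition of an $(n,1,t)$-CI function. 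Since every step in the chain above is an equivalence, both directions of the Fact are obtained simultaneously.
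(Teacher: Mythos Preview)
The paper does not give its own proof of this Fact; it is simply quoted from \cite{WangDiscrete2018} and used as input to the third characterization. So there is nothing to compare your argument against directly.

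That said, your plan---specializing the proof of Theorem~\ref{theorem:t2} to $m=1$---is exactly the right way to recover the Fact from the paper's own machinery, and the paper itself remarks in the conclusion that the Fourier spectral characterization of single-output Boolean functions is the $m=1$ case of Theorem~\ref{theorem:t2}. Your reduction of $\mathcal{F}_{\pi\cdot f_b}(2^{n-t})=0$ to $\Phi_{2^t}(z)\mid(\pi\cdot F)(z)$, the collapse of the high-index factors via $z^{2^i}\equiv 1$, the split on $x_t$ using $z^{2^{t-1}}\equiv -1$, and the observation that the remaining monomials $\prod_{i=1}^{t-1}(z^{2^{i-1}})^{x_i}$ are exactly $1,z,\ldots,z^{2^{t-1}-1}$ (so their coefficients must vanish individually) are all correct and match the paper's argument for Theorem~\ref{theorem:t2}. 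One small step you left implicit: the vanishing of that coefficient only says $\#\{f_b=0\mid x_t=0\}-\#\{f_b=1\mid x_t=0\}=\#\{f_b=0\mid x_t=1\}-\#\{f_b=1\mid x_t=1\}$; you need to pair it with the trivial identity that both sides have total count $2^{n-t}$ to conclude $a_\alpha=b_\alpha$. For $m=1$ this replaces the paper's product-of-cyclotomic-polynomials argument $h(z)=\prod_{i=1}^m\Phi_{2^i}(z)$ by a one-line parity count, which is the only genuine simplification the single-output case affords.

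Your handling of the quantifier over $\pi$ mirrors the (somewhat terse) final paragraph of the proof of Theorem~\ref{theorem:t2}: the identity together with the transpositions $(s,t)$, $1\le s\le t-1$, shows that $P_r(f_b=\alpha\mid x_1,\ldots,x_t)$ depends on none of $x_1,\ldots,x_t$, and a general $\pi$ then carries this block to an arbitrary $t$-subset. This is fine.
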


It is known from Walsh spectral characterization (Fact \ref{fact1}) that a multi-output Boolean function is $t$ th-order CI if and only if all its nonzero linear combinations of the component functions of $f(\bm{x})$ are $t$ th-order CI. Then another Fourier spectral characterization is given below.
\begin{theorem}
Let $f(\bm{x})=(f_1(\bm{x}),f_2(\bm{x}),\cdots, f_m(\bm{x}))$ be a multi-output Boolean function from $\mathbb{F}^n_2$ to $\mathbb{F}_2^m$. Then $f(\bm{x})$ 
is an $(n, m, t)$-CI function if and only if
\[\mathcal{F}_{\pi \cdot (\bm{v} \cdot f(\bm{x}))}(2^{n-t}) =0,\bm{v}\neq \bm{0},\]
for $\forall \pi \in S_n$.
\end{theorem}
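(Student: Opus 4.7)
The plan is to obtain this characterization as a direct composition of two results already established in the paper: the Golomb--Xiao--Massey (Walsh-spectral) characterization stated in Fact~\ref{fact1}, and the single-output Fourier-spectral characterization of Wang and Gong recalled immediately before the theorem. Neither step requires new machinery, so the proof should essentially be a two-line chain of equivalences once the quantifiers are lined up.

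First I would use Fact~\ref{fact1} to translate correlation immunity of the vectorial function into correlation immunity of all its nonzero component combinations. Explicitly, $f$ is $(n,m,t)$-CI iff $\widehat{f}(\bm{u},\bm{v})=0$ for every $\bm{v}\in\mathbb{F}_2^{m*}$ and every $\bm{u}$ with $1\le wt(\bm{u})\le t$, which by Xiao--Massey applied to the single-output Boolean function $\bm{v}\cdot f(\bm{x})$ is equivalent to saying that $\bm{v}\cdot f(\bm{x})$ is an $(n,1,t)$-CI function for every nonzero $\bm{v}$. This is the same ``linear combination'' reduction principle used throughout the paper.

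Second, I would apply the Wang--Gong fact to each component combination $\bm{v}\cdot f(\bm{x})$: it is $(n,1,t)$-CI iff $\mathcal{F}_{\pi\cdot(\bm{v}\cdot f)}(2^{n-t})=0$ for every $\pi\in S_n$. Conjoining over all $\bm{v}\in\mathbb{F}_2^{m*}$ yields exactly the conclusion
\[
\mathcal{F}_{\pi\cdot(\bm{v}\cdot f(\bm{x}))}(2^{n-t})=0 \quad \text{for all } \bm{v}\neq \bm{0} \text{ and all } \pi\in S_n.
\]

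The only thing to be careful about is the ordering and independence of the two quantifiers ``$\forall \bm{v}\neq\bm{0}$'' and ``$\forall \pi\in S_n$''; but since the Wang--Gong equivalence is quantified over $\pi$ for each fixed Boolean function separately, and the Golomb--Xiao--Massey reduction is quantified over $\bm{v}$ independently, the two quantifiers simply compose. Consequently there is no real obstacle in the proof: the task reduces to a clean citation of Fact~\ref{fact1} followed by an invocation of the Wang--Gong characterization applied to the Boolean function $\bm{v}\cdot f(\bm{x})$ for each nonzero $\bm{v}$.
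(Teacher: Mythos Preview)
Your proposal is correct and matches the paper's own approach exactly: the paper presents this theorem as an immediate consequence of Fact~\ref{fact1} (reducing $(n,m,t)$-CI to $(n,1,t)$-CI of every nonzero combination $\bm{v}\cdot f$) together with the Wang--Gong Fourier-spectral characterization of $(n,1,t)$-CI functions, with no additional argument supplied. Your two-step chain of equivalences is precisely what the paper intends.
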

\begin{corollary}
Let $f(x_1,x_2,\cdots ,x_n)$ be a symmetric multi-output Boolean function from $\mathbb{F}^n_2$ to $\mathbb{F}_2^m$. Then $f(x_1,x_2,\cdots ,x_n)$ is an $(n, m, t)$-CI function if and only if
\[\mathcal{F}_{\bm{v} \cdot f(\bm{x})}(2^{n-t}) =0, \bm{v}\neq \bm{0}.\]
\end{corollary}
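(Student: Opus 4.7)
The plan is to deduce this corollary as a direct specialization of the preceding theorem, exploiting nothing more than the symmetry hypothesis. That theorem characterizes $(n,m,t)$-CI functions by the vanishing of $\mathcal{F}_{\pi\cdot(\bm{v}\cdot f(\bm{x}))}(2^{n-t})$ for every $\pi\in S_n$ and every $\bm{v}\neq\bm{0}$; so my whole task is to argue that under symmetry of $f$, the universal quantifier over $\pi$ collapses, leaving only the condition indexed by $\bm{v}$.

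The key observation is that symmetry of $f$ propagates to every linear combination of its component functions. Since $f(x_{\pi(1)},\ldots,x_{\pi(n)})=f(x_1,\ldots,x_n)$ for every $\pi\in S_n$, the same identity holds coordinate-wise for each component $f_i$, and hence for every $\mathbb{F}_2$-linear combination $\bm{v}\cdot f(\bm{x})=\sum_{i=1}^{m}v_if_i(\bm{x})$. Thus each $\bm{v}\cdot f$ is a symmetric (single-output) Boolean function, which gives $\pi\cdot(\bm{v}\cdot f)=\bm{v}\cdot f$ for every $\pi\in S_n$. The sequence associated with $\pi\cdot(\bm{v}\cdot f)$ therefore equals the sequence associated with $\bm{v}\cdot f$, and consequently
\[
\mathcal{F}_{\pi\cdot(\bm{v}\cdot f(\bm{x}))}(2^{n-t})=\mathcal{F}_{\bm{v}\cdot f(\bm{x})}(2^{n-t})
\qquad\text{for all }\pi\in S_n.
\]
Substituting this equality into the preceding theorem, the condition $\mathcal{F}_{\pi\cdot(\bm{v}\cdot f(\bm{x}))}(2^{n-t})=0$ for all $\pi\in S_n$ and all $\bm{v}\neq\bm{0}$ reduces to $\mathcal{F}_{\bm{v}\cdot f(\bm{x})}(2^{n-t})=0$ for all $\bm{v}\neq\bm{0}$, which is exactly the claimed equivalence.

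There is no substantive obstacle here, as the argument is a purely formal specialization; the only point worth highlighting explicitly is the stability of symmetry under linear combinations, which guarantees that invoking the preceding theorem is legitimate once the universal quantifier over $\pi$ has been eliminated. One could alternatively derive the same statement from the generalized-function viewpoint of Corollary~\ref{c1}, but the route through the component-function theorem proved just above is the cleaner and more direct one to present as a corollary.
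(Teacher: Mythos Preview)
Your proposal is correct and matches the paper's approach: the paper states this corollary without proof, as an immediate specialization of the preceding theorem, exactly as you argue (mirroring the remark before Corollary~\ref{c1} that $f_g=\pi\cdot f_g$ under symmetry). The one step you make explicit that the paper leaves tacit---that symmetry of $f$ passes to every linear combination $\bm{v}\cdot f$---is the right observation to justify dropping the quantifier over $\pi$.
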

\section{Conclusions}\label{section:s7}
In this paper, we have studied three new characterizations for multi-output CI Boolean functions. The first characterization was given in terms of the Walsh transforms of corresponding generalized Boolean functions. The last two characterizations were obtained in terms of the Fourier transforms over the complex field. 
\begin{enumerate}
  \item A generalized Boolean function $f_g$ is an $(n, m, t)$-CI function if and only if
        $$ \sum_{\bm{x}\in \mathbb{F}_2^n}\omega_i^{f_g(\bm{x})}(-1)^{\bm{c}\cdot\bm{x}}=0,$$ for $1\leq wt(\bm{c})\leq t$ and $1 \leq i \leq m$, where $\omega_i$ is a $2^ith$ primitive root of unity in the complex field. This characterization reduces the computational complexity compared to the previous Walsh spectral characterization.
  \item A generalized Boolean function $f_g$ is an $(n, m, t)$-CI function if and only if
        $$\mathcal{F}^{\{i\}}_{\pi \cdot f_g}(2^{n-t})=0,$$ for $\forall \pi \in S_n$ and $1 \leq i \leq m$. Moreover, a symmetric generalized Boolean function $f_g$ is an $(n, m, t)$-CI function if and only if $$\mathcal{F}^{\{i\}}_{f_g}(2^{n-t})=0,$$ for $1 \leq i \leq m$.
  \item A multi-output Boolean function $f(\bm{x})$ is an $(n, m, t)$-CI function if and only if \[\mathcal{F}_{\pi \cdot (\bm{v} \cdot f(\bm{x}))}(2^{n-t}) =0, \bm{v}\neq \bm{0},\]  for $\forall \pi \in S_n$.
       A symmetric $(n, m)$-function is an $(n, m, t)$-CI function if and only if \[\mathcal{F}_{\bm{v} \cdot f(\bm{x})}(2^{n-t}) =0, \bm{v}\neq \bm{0}.\]
\end{enumerate}

The Golomb-Xiao-Massey characterization \cite{Golomb1959On,Golomb1967Shift,Xiao1988A,Golomb1999On} and the Fourier spectral characterization \cite{WangDiscrete2018} of (single-output) Boolean functions can be regarded as a special case of the results in this paper when $m=1$.
\bibliographystyle{spmpsci}      
\bibliography{references}   


\end{document}